\providecommand{\U}[1]{\protect\rule{.1in}{.1in}}
\newtheorem{theorem}{Theorem}
\newtheorem{acknowledgement}[theorem]{Acknowledgement}
\newtheorem{notation}[theorem]{Notation}
\newtheorem{proposition}[theorem]{Proposition}
\newenvironment{proof}[1][Proof]{\noindent\textbf{#1.} }{\ \rule{0.5em}{0.5em}}
\begin{document}

\title{On the Use of Minimum Volume Ellipsoids and Symplectic Capacities \ for
Studying Classical Uncertainties for Joint Position--Momentum Measurements}
\author{Maurice A. de Gosson\thanks{E-mail: maurice.de.gosson@univie.ac.at}\\\ \textit{Universit\"{a}t Wien, NuHAG}\\\textit{Fakult\"{a}t f\"{u}r Mathematik }\\\textit{A-1090 Wien}}
\maketitle

\begin{abstract}
We study the minimum volume ellipsoid estimator associates to a cloud of
points in phase space. Using as a natural measure of uncertainty the
symplectic capacity of the covariance ellipsoid we find that classical
uncertainties obey relations similar to those found in non-standard quantum mechanics.

\end{abstract}

\textbf{Keywords}: position-momentum measurements, minimum volume ellipsoid,
symplectic capacity, uncertainty principle, Hamiltonian system

\section{Introduction}

Contrarily to what is often believed the Heisenberg uncertainty principle%
\begin{equation}
\Delta p_{j}\Delta x_{j}\geq\tfrac{1}{2}\hbar\text{ }\label{hup}%
\end{equation}
is not a statement about the accuracy of our measurement instruments; its
derivation assumes on the contrary perfect instruments. The correct
interpretation of Heisenberg's inequalities is the following (see e.g. Peres
\cite{Peres}, p.93): if the same preparation procedure is repeated a large
number of times, and is followed by either by a measurement of $x_{j}$, or by
a measurement of $p_{j}$, the results obtained have standard deviations
$\Delta x_{j}$ and $\Delta p_{j}$ satisfying (\ref{hup}). The same
interpretation is of course true for the stronger Robertson--Schr\"{o}dinger
\cite{Rob,Schr} inequalities%
\begin{equation}
(\Delta p_{j})^{2}(\Delta x_{j})^{2}\geq\Delta(x_{j},p_{j})^{2}+\tfrac{1}%
{4}\hbar^{2}\text{ }\label{robup}%
\end{equation}
to which (\ref{hup}) reduce if one neglects the covariances $\Delta
(x_{j},p_{j})^{2}$; they are complemented by the trivial inequalities%
\begin{equation}
\Delta p_{j}\Delta x_{k}\geq0\text{ if }j\neq k\text{, \ }\Delta p_{j}\Delta
p_{k}\geq0\text{ \ , \ }\Delta x_{j}\Delta x_{k}\geq0\label{noconj1}%
\end{equation}
which might be violated in a nonstandard form of quantum mechanics,
noncommutative mechanics (NCQM, see Dias et al. \cite{DGLP} and the references
therein), where the second and third inequalities (\ref{noconj1}) are be
replaced with%
\begin{equation}
\Delta p_{j}^{2}\Delta p_{k}^{2}\geq\Delta(p_{j},p_{k})^{2}+\tfrac{1}{4}%
\theta_{jk}^{2}\text{ , }\Delta x_{j}^{2}\Delta x_{k}^{2}\geq\Delta
(x_{j},x_{k})^{2}+\tfrac{1}{4}\eta_{jk}^{2}\label{noconj2}%
\end{equation}
where $\theta_{jk}^{2}=\theta_{kj}^{2}$ and $\eta_{jk}^{2}=\eta_{kj}^{2}$.

In \emph{classical} statistical mechanics the situation is somewhat different:
due to the inherent inaccuracy of the measurement apparatus there are
uncertainties for all pairs of variables, conjugate or not. We are going to
show in this article that despite the different nature of quantum and
classical uncertainties, the latter can be described in a similar way, leading
to inequalities which are formally identical to those of NCQM, namely%
\begin{align}
\Delta x_{j}^{2}\Delta x_{k}^{2}  &  \geq\Delta(x_{j},x_{k})^{2}+a_{jk}^{2}\\
\Delta p_{j}^{2}\Delta p_{k}^{2}  &  \geq\Delta(p_{j},p_{k})^{2}+c_{jk}^{2}\\
\Delta x_{j}^{2}\Delta p_{k}^{2}  &  \geq\Delta(x_{j},p_{k})^{2}+b_{jk}^{2}%
\end{align}
with $a_{jk}^{2}=a_{kj}^{2}$, $b_{jk}^{2}=b_{kj}^{2}$, $c_{jk}^{2}=c_{kj}^{2}%
$. We have actually already shown in a recent paper \cite{FP} (also see de
Gosson and Luef \cite{physreps}) that inequalities of the type (\ref{hup}%
)--(\ref{robup}) are by no means characteristic of quantum mechanics, and that
there are formally similar statements in classical statistical mechanics; to
sustain our claim we used tools from robust multivariate statistics (the
Minimum Volume Ellipsoid method, reviewed in Section \ref{sec2}) together with
a topological device, the notion of symplectic capacity, which we use as a
natural device for measuring uncertainty. The symplectic capacity of a closed
phase space set is the equatorial area of the largest ball that can be sent in
this set using only symplectomorphisms (=canonical transformations). The
symplectic capacity of the covariance ellipsoid is not related to its volume
(it always has the dimension of an area), but rather to that of a classical
action. We note that symplectic capacities have been used by Scheeres and his
collaborators \cite{sch2,sch1} to study satellite guidance and constraints on
spacecraft trajectories.

The aim of this paper is to extend the results in \cite{FP} and to show that
these NCQM uncertainties also appear quite naturally in classical statistical mechanics.

\begin{notation}
We identify $\mathbb{R}^{n}\times\mathbb{R}^{n}$ with $\mathbb{R}^{2n}$ and
write $(x,p)=z$ if $x\in\mathbb{R}^{n}$ and $p\in\mathbb{R}^{n}$. We will view
$x,p$ and $z$ as column vectors in all matrix computations. A symplectic form
on $\mathbb{R}^{2n}$ is a non-degenerate bilinear antisymmetric form on that
space. Let $\alpha$ be a symplectic form on $\mathbb{R}^{2n}$; the
corresponding symplectic group is denoted $\operatorname*{Sp}(2n,\alpha)$ [it
is the group of all linear automorphisms $S$ of $\mathbb{R}^{2n}$ such that
$\alpha(Sz,Sz^{\prime})=\alpha(z,z^{\prime})$ for all vectors $z,z^{\prime}$
in $\mathbb{R}^{2n}$]. We denote by $\sigma$ the standard symplectic form on
$\mathbb{R}^{2n}$: $\sigma(z,z^{\prime})=(z^{\prime})^{T}Jz$ if $z=(x,p)$,
$z^{\prime}=(x^{\prime},p^{\prime})$; $J$ is the standard symplectic matrix $%
\begin{pmatrix}
0_{n\times n} & I_{n\times n}\\
-I_{n\times n} & 0_{n\times n}%
\end{pmatrix}
$.
\end{notation}

\section{\label{sec2}The Minimum Volume Ellipsoid Method}

Let us consider a system of point-like particles; we assume that there are $n$
degrees of freedom and label the position coordinates and momenta
$x=(x_{1},...,x_{n})$, $p=(p_{1},...,p_{n})$; we will also use the collective
notation $z=(x,p)$. Assume now that we perform position and momentum
measurements on a large number $K$ of identical copies of that system; we get
a cloud $\mathcal{S}=\{z_{1},z_{2},...,z_{N}\}$, $N=nK$, of points in
$\mathbb{R}^{2n}$. An efficient method for studying that cloud consists in
using the minimum volume ellipsoid (MVE) method for the multivariate location
and scatter (Rousseeuw \cite{Rou}). Geometrically speaking this method is an
application of the John--L\"{o}wner theorem (see for instance Ball
\cite{Ball}): consider a subset $\{z_{i_{1}},z_{i_{2}},...,z_{i_{k}}\}$ of
$\mathcal{S}$; we will assume that the points $z_{i_{j}}$ are in general
position, i.e. that they do not remain in some hyperplane of $\mathbb{R}^{2n}%
$. This condition is sufficient and necessary for any ellipsoid containing
these points to have positive volume. The points $z_{i_{1}},z_{i_{2}%
},...,z_{i_{k}}$ determine a polyhedron in $\mathbb{R}^{2n}$; we denote by
$\mathcal{K}_{k}$ the convex hull of that polyhedron (it is the smallest
subset of $\mathbb{R}^{2n}$ containing $\{z_{i_{1}},z_{i_{2}},...,z_{i_{k}}%
\}$). The John--L\"{o}wner theorem ensures us that there exists a unique
ellipsoid $\mathcal{J}_{k}$ in $\mathbb{R}^{2n}$ containing $\mathcal{K}_{k}$
and having minimum volume among all the ellipsoids having this property.
Repeating this process for all subsets of the cloud $\mathcal{S}$ having $k$
elements in general position we get a family of ellipsoids; by definition the
MVE is the one with the smallest volume.

This method is practically implemented as follows (see e.g. Van Aelst and
Rousseeuw \cite{aero}, Hubert et al. \cite{hurovan}): choose an integer $k$
between $[N/2]+1$ and $N$ ($[N/2]$ the integer part of $N/2$); this constant
$k$ determines the robustness of the resulting estimator; a common choice is
\begin{equation}
k=\left[  \frac{N+2n+1}{2}\right]  . \label{k}%
\end{equation}
By definition, the location and scatter estimators minimize the determinant of
the matrices $M$ subject to the condition
\begin{equation}
\#\left\{  j:(z_{j}-\bar{z})^{T}M^{-1}(z_{j}-\bar{z})\leq m^{2}\right\}  \geq
k \label{jl2}%
\end{equation}
where the minimization is over all $\bar{z}\in\mathbb{R}^{2n}$ and all
positive definite symmetric matrices $M$ of size $2n$. Here $m$ is a fixed
constant, chosen so that the MVE estimator is a consistent estimator for of
the covariance matrix for data coming from a multivariate normal distribution,
that is
\[
m=\sqrt{\chi_{2n,\alpha}^{2}}\text{ \ , \ }\alpha=k/N
\]
where $\chi_{2n,\alpha}^{2}$ is a chi-square distribution with $2n$ degrees of
freedom (see Lopuh\"{a}a and Rousseeuw \cite{loro}). One the pair $(M,\bar
{z})$ is determined, the minimum volume ellipsoid (MVE) is the set of all $z$
in $\mathbb{R}^{2n}$ such that
\begin{equation}
(z-\bar{z})^{T}M^{-1}(z-\bar{z})\leq m^{2}. \label{jl1}%
\end{equation}

The next step consists in associating to the MVE $\mathcal{J}$ a covariance
matrix. For this one has to choose an adequate value $m_{0}$ for $m$; denoting
the corresponding matrix $M$ by $\Sigma$ the MVE is the ellipsoid%
\begin{equation}
\mathcal{C}:(z-\bar{z})^{T}\Sigma^{-1}(z-\bar{z})\leq m_{0}^{2} \label{cov1}%
\end{equation}
and $\Sigma$ is then precisely the covariance matrix.

We will write the covariance matrix in the form%
\begin{equation}
\Sigma=%
\begin{pmatrix}
\Delta(x,x) & \Delta(x,p)\\
\Delta(p,x) & \Delta(p,p)
\end{pmatrix}
\label{sigmablock1}%
\end{equation}
where $\Delta(x,x)=\left(  (\Delta(x_{i},x_{j})\right)  _{1\leq i,j\leq n}$
and so on. We will write, as is customary, $\Delta(x_{i},x_{i})=$ $\Delta
x_{i}^{2}$ and $\Delta(p_{i},p_{i})=$ $\Delta p_{i}^{2}$; we have
$\Delta(x_{i},x_{j})=\Delta(x_{j},x_{i})$, $\Delta(p_{i},p_{j})=\Delta
(p_{j},p_{i})$, $\Delta(x_{i},p_{j})=\Delta(p_{j},x_{i})$ hence the covariance
matrix is symmetric.

\section{A Condition on $\Sigma$}

Let $A=(a_{jk})_{1\leq j,k\leq n}$ and $C=(c_{jk})_{1\leq j,k\leq n}$ be two
real antisymmetric matrices, and $B=(b_{jk})_{1\leq j,k\leq n}$ a real
symmetric matrix. To $A,B,C$ we associate the $2n\times2n$ antisymmetric
matrix
\[
\Omega=%
\begin{pmatrix}
A & B\\
-B & C
\end{pmatrix}
\]
which is the most general form an antisymmetric of size $2n$ can have. We will
always assume that the matrix $\Omega$ is in addition invertible. This
condition implies that the bilinear form $\omega$ on $\mathbb{R}^{2n}$ defined
by $\omega(z,z^{\prime})=-(z^{\prime})^{T}\Omega^{-1}z$ is a symplectic form
on $\mathbb{R}^{2n}$, and we denote by $(\mathbb{R}^{2n},\omega)$ the
corresponding symplectic phase space. Notice that when $A=C=0$ and $B=I$ we
have $\Omega=J=%
\begin{pmatrix}
0 & I\\
-I & 0
\end{pmatrix}
$, the standard symplectic matrix. The general case can actually be reduced to
the standard case:

\begin{proposition}
\label{propf}There exist linear automorphisms $F$ of $\mathbb{R}^{2n}$ such
that $\Omega=F^{T}JF$; equivalently $F$ is a linear symplectomorphism
$(\mathbb{R}^{2n},\sigma)\longrightarrow(\mathbb{R}^{2n},\omega)$, that is we
have $\omega(Fz,Fz^{\prime})=\sigma(z,z^{\prime})$ for all $z$ and $z^{\prime
}$.
\end{proposition}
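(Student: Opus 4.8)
The plan is to think of $\Omega$ as (minus the inverse of) the Gram matrix of the symplectic form $\omega$ and then invoke the linear version of Darboux's theorem, which says any two symplectic forms on $\mathbb{R}^{2n}$ are linearly equivalent. Concretely, since $\omega(z,z') = -(z')^T \Omega^{-1} z$ is by hypothesis a symplectic (i.e. non-degenerate antisymmetric) bilinear form, the linear Darboux theorem provides a basis in which $\omega$ has the standard matrix $J$; collecting that basis into the columns of a matrix $F^{-1}$ (so that $F$ maps the standard basis to it) gives $\omega(Fz, Fz') = \sigma(z,z')$ for all $z,z'$. Rewriting the left side as $-(Fz')^T \Omega^{-1}(Fz) = -(z')^T F^T \Omega^{-1} F z$ and the right side as $(z')^T J z$, and using that this holds for all $z,z'$, yields $F^T \Omega^{-1} F = -J$; inverting and using $J^{-1} = -J = J^T$ together with $(F^T)^{-1} = (F^{-1})^T$ gives $\Omega = F^T J F$ after renaming $F \mapsto F^{-1}$. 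So the two formulations in the statement are equivalent modulo this bookkeeping, and it suffices to produce one $F$ with $\Omega = F^T J F$.

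For the existence of $F$ I would give a direct inductive (symplectic Gram–Schmidt) construction rather than quoting Darboux as a black box, since it is short and self-contained. Start with any nonzero $e_1$; by non-degeneracy of $\omega$ there is a vector $v$ with $\omega(e_1,v)\neq 0$, and after rescaling we may take $f_1$ with $\omega(e_1,f_1)=1$. The pair $e_1,f_1$ spans a $2$-dimensional $\omega$-symplectic subspace $V_1$; its $\omega$-orthogonal complement $V_1^{\perp_\omega}$ is $(2n-2)$-dimensional, is complementary to $V_1$, and the restriction of $\omega$ to it is again non-degenerate. Iterating gives a basis $e_1,f_1,\dots,e_n,f_n$ with $\omega(e_i,e_j)=\omega(f_i,f_j)=0$ and $\omega(e_i,f_j)=\delta_{ij}$. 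Reordering this basis as $(e_1,\dots,e_n,f_1,\dots,f_n)$ and letting $F$ be the matrix whose action sends the standard basis of $(\mathbb{R}^{2n},\sigma)$ to this symplectic basis realizes $\omega(Fz,Fz') = \sigma(z,z')$, and $F$ is invertible since it carries a basis to a basis.

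The only real subtlety — the step I would flag as the point to handle carefully — is verifying that at each stage the restriction of $\omega$ to $V_k^{\perp_\omega}$ stays non-degenerate, so that the induction does not stall; this is where non-degeneracy of $\omega$ on all of $\mathbb{R}^{2n}$ (equivalently, invertibility of $\Omega$) is essential, and it follows from the standard fact that for a non-degenerate form $\dim W + \dim W^{\perp_\omega} = 2n$ and $(W^{\perp_\omega})^{\perp_\omega} = W$, so a vector in the radical of $\omega|_{V_k^{\perp_\omega}}$ would lie in $(V_k^{\perp_\omega})^{\perp_\omega} \cap V_k^{\perp_\omega} = V_k \cap V_k^{\perp_\omega} = \{0\}$. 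Everything else is routine linear algebra; no estimates or hard analysis are involved.
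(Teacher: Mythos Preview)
Your approach is essentially the paper's: define $F$ by sending a symplectic basis of $(\mathbb{R}^{2n},\sigma)$ to one of $(\mathbb{R}^{2n},\omega)$; the paper simply asserts the existence of such bases, whereas you supply the symplectic Gram--Schmidt construction and check non-degeneracy on the successive complements. One small bookkeeping slip: from $F^{T}\Omega^{-1}F=-J$ one obtains $\Omega=FJF^{T}$, so the relabeling that yields $\Omega=F^{T}JF$ is $F\mapsto F^{T}$, not $F\mapsto F^{-1}$.
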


\begin{proof}
Let $\mathcal{B}^{\sigma}=\{e_{1}^{\sigma},...,e_{n}^{\sigma}\}\cup
\{f_{1}^{\sigma},...,f_{n}^{\sigma}\}$ and $\mathcal{B}^{\omega}%
=\{e_{1}^{\omega},...,e_{n}^{\omega}\}\cup\{f_{1}^{\omega},...,f_{n}^{\omega
}\}$ be symplectic bases of $(\mathbb{R}^{2n},\sigma)$ and $(\mathbb{R}%
^{2n},\omega)$ respectively (i.e. $\omega(e_{j}^{\omega},e_{k}^{\omega
})=\omega(f_{j}^{\omega},f_{k}^{\omega})=0$, $\omega(f_{j}^{\omega}%
,e_{k}^{\omega})=\delta_{jk}$ and similar relations for the $\sigma
(e_{j}^{\sigma},e_{k}^{\sigma})$, etc.). The automorphism $F$ defined by
$F(e_{j}^{\sigma})=e_{j}^{\omega}$, $F(f_{j}^{\sigma})=f_{j}^{\omega}$ is a
symplectomorphism $(\mathbb{R}^{2n},\sigma)\longrightarrow(\mathbb{R}%
^{2n},\omega)$.
\end{proof}

We remark that this result (which can also be proved using the properties of
the Pfaffian $Pf(\Omega)$) is just a restatement of the linear (and global)
version of Darboux's theorem \cite{HZ} on the local equivalence of all
symplectic manifolds with same dimension.

Let now $\Sigma$ be the covariance matrix as defined above, and consider the
matrix
\begin{equation}
\Sigma+i\Omega=%
\begin{pmatrix}
\Delta(x,x)+iA & \Delta(x,p)+iB\\
\Delta(p,x)-iB & \Delta(p,p)+iC
\end{pmatrix}
. \label{sigom}%
\end{equation}
We observe that $\Sigma+i\Omega$ is Hermitian since $\Sigma$ is symmetric and
$(i\Omega)^{\ast}=i\Omega$ since $\Omega$ is real antisymmetric. The
eigenvalues of $\Sigma+i\Omega$ \ are thus real. From now on we will assume
that these eigenvalue are nonnegative, that is $\Sigma+i\Omega$ is
semi-definite positive, which we write
\begin{equation}
\Sigma+i\Omega\geq0. \label{omega}%
\end{equation}
(One can actually show that this condition automatically implies that
$\Sigma>0$). Notice that we have in particular%
\begin{equation}
\Delta(x,x)+iA\geq0\text{ \ , \ }\Delta(p,p)+iC\geq0. \label{delta}%
\end{equation}

When $n=1$ the covariance matrix is just%
\[
\Sigma=%
\begin{pmatrix}
\Delta x^{2} & \Delta(x,p)\\
\Delta(p,x) & \Delta p^{2}%
\end{pmatrix}
\]
and the antisymmetric matrices $\Theta$ and $N$ are zero so that $\Omega=aJ=%
\begin{pmatrix}
0 & a\\
-a & 0
\end{pmatrix}
$. The condition $\Sigma+i\Omega=\Sigma+iaJ\geq0$ is in this case equivalent
to%
\[%
\begin{pmatrix}
\Delta x^{2}+ia & \Delta(x,p)\\
\Delta(p,x) & \Delta p^{2}-ia
\end{pmatrix}
\geq0
\]
which is in turn equivalent to the single inequality
\[
\Delta x^{2}\Delta p^{2}\geq\Delta(x,p)^{2}+a^{2}.
\]
This is of course formally the Robertson--Schr\"{o}dinger inequality
(\ref{robup}); in particular we have the Heisenberg-type inequality $\Delta
x\Delta p\geq a$.

Let us extend the study to higher dimensions. When $n=2$ the matrices $A,B,$
and $C$ are of the type%
\[
A=%
\begin{pmatrix}
0 & a\\
-a & 0
\end{pmatrix}
\text{, }B=%
\begin{pmatrix}
b & d\\
d & e
\end{pmatrix}
\text{, }C=%
\begin{pmatrix}
0 & c\\
-c & 0
\end{pmatrix}
\]
so that $\Sigma+i\Omega$ is the $4\times4$ matrix%
\[%
\begin{pmatrix}
\Delta x_{1}^{2} & \Delta(x_{1},x_{2})+ia & \Delta(x_{1},p_{1})+ib &
\Delta(x_{1},p_{2})+id\\
\Delta(x_{2},x_{1})-ia & \Delta x_{2}^{2} & \Delta(x_{2},p_{1})+id &
\Delta(x_{2},p_{2})+ie\\
\Delta(p_{1},x_{1})-ib & \Delta(p_{1},x_{2})-id & \Delta p_{1}^{2} &
\Delta(p_{1},p_{2})+ic\\
\Delta(p_{2},x_{1})-id & \Delta(p_{2},x_{2})-ie & \Delta(p_{2},p_{1})-ic &
\Delta p_{2}^{2}%
\end{pmatrix}
.
\]
Recalling Sylvester's criterion \cite{horn,johnson} which says that a
Hermitian matrix is positive semidefinite if an only if all of its principal
minors are nonnegative, the condition $\Sigma+i\Omega\geq0$ implies that the
principal minors of order two of $\Sigma+i\Omega$ must be $\geq0$, we
immediately get the inequalities
\begin{subequations}
\label{up}%
\begin{align*}
\Delta x_{1}^{2}\Delta x_{2}^{2}  &  \geq\Delta(x_{1},x_{2})^{2}+a^{2}\\
\Delta x_{1}^{2}\Delta p_{1}^{2}  &  \geq\Delta(x_{1},p_{1})^{2}+b^{2}\\
\Delta p_{1}^{2}\Delta p_{2}^{2}  &  \geq\Delta(p_{1},p_{2})^{2}+c^{2}\\
\Delta x_{1}^{2}\Delta p_{2}^{2}  &  \geq\Delta(x_{1},p_{2})^{2}+d^{2}\\
\Delta x_{2}^{2}\Delta p_{1}^{2}  &  \geq\Delta(x_{2},p_{1})^{2}+d^{2}\\
\Delta x_{2}^{2}\Delta p_{2}^{2}  &  \geq\Delta(x_{2},p_{2})^{2}+e^{2}.
\end{align*}

The same argument shows that more generally:
\end{subequations}
\begin{proposition}
Let $n\geq2$. If the covariance matrix $\Sigma$ satisfies the condition
$\Sigma+i\Omega\geq0$ then the following uncertainty relations hold:%
\begin{align}
\Delta x_{j}^{2}\Delta x_{k}^{2}  &  \geq\Delta(x_{j},x_{k})^{2}+a_{jk}%
^{2}\label{upc1}\\
\Delta p_{j}^{2}\Delta p_{k}^{2}  &  \geq\Delta(p_{j},p_{k})^{2}+c_{jk}%
^{2}\label{upc2}\\
\Delta x_{j}^{2}\Delta p_{k}^{2}  &  \geq\Delta(x_{j},p_{k})^{2}+b_{jk}^{2}.
\label{upc3}%
\end{align}
In particular, if $\Omega=\varepsilon J$, $\varepsilon>0$, these conditions
reduce to the Robertson--Schr\"{o}dinger inequalities
\[
\Delta x_{j}^{2}\Delta p_{k}^{2}\geq\Delta(x_{j},p_{k})^{2}+\varepsilon^{2}.
\]

\end{proposition}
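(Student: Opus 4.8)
The plan is to imitate the $n=2$ computation displayed above: apply Sylvester's criterion to the $2\times 2$ principal minors of the $2n\times 2n$ Hermitian matrix $\Sigma+i\Omega$ and read off the three families of inequalities.

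First I would fix notation for the entries. Index the rows and columns of $\Sigma+i\Omega=(H_{\mu\nu})_{1\le\mu,\nu\le 2n}$ so that $1,\dots,n$ correspond to the positions and $n+1,\dots,2n$ to the momenta; from the block form (\ref{sigom}) one then has $H_{jk}=\Delta(x_j,x_k)+ia_{jk}$, $H_{n+j,\,n+k}=\Delta(p_j,p_k)+ic_{jk}$, and $H_{j,\,n+k}=\Delta(x_j,p_k)+ib_{jk}$ for $1\le j,k\le n$. Since $\Sigma$ is symmetric, $A$ and $C$ are antisymmetric and $B$ is symmetric, a short check gives $H_{\nu\mu}=\overline{H_{\mu\nu}}$, which re-confirms that $\Sigma+i\Omega$ is Hermitian; this is precisely what makes every principal minor real and allows Sylvester's criterion to be invoked.

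Next, for each pair $j,k$ I would single out three $2\times 2$ principal submatrices of $\Sigma+i\Omega$: the one with row/column index set $\{j,k\}$ (both positions), the one with index set $\{n+j,n+k\}$ (both momenta), and the one with index set $\{j,n+k\}$ (mixed). By Sylvester's criterion each has nonnegative determinant. Since the determinant of a $2\times 2$ Hermitian matrix equals the product of the two real diagonal entries minus the squared modulus of the off-diagonal entry, and since that off-diagonal entry is $\Delta(x_j,x_k)+ia_{jk}$, $\Delta(p_j,p_k)+ic_{jk}$, resp.\ $\Delta(x_j,p_k)+ib_{jk}$, whose squared moduli are $\Delta(x_j,x_k)^2+a_{jk}^2$, $\Delta(p_j,p_k)^2+c_{jk}^2$, resp.\ $\Delta(x_j,p_k)^2+b_{jk}^2$, one obtains (\ref{upc1}), (\ref{upc2}) and (\ref{upc3}) simultaneously.

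For the last assertion, $\Omega=\varepsilon J$ means $A=C=0$ and $B=\varepsilon I$, hence $a_{jk}=c_{jk}=0$ and $b_{jk}=\varepsilon\delta_{jk}$; substituting into (\ref{upc3}) gives $\Delta x_j^2\,\Delta p_k^2\ge\Delta(x_j,p_k)^2+\varepsilon^2\delta_{jk}$, which on the diagonal $j=k$ is exactly the Robertson--Schr\"{o}dinger inequality. I do not anticipate any genuine obstacle: the entire content is the one-line $2\times 2$ determinant identity together with Sylvester's criterion. The only point demanding a little care is the sign bookkeeping that guarantees the chosen $2\times 2$ blocks are honest principal submatrices (identical row- and column-index sets) and that their imaginary off-diagonal parts are exactly $\pm a_{jk}$, $\pm c_{jk}$, $\pm b_{jk}$ — the explicit $n=2$ matrix written out above already exhibits the pattern.
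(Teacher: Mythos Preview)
Your proposal is correct and follows exactly the paper's approach: the paper's entire proof is the sentence ``the same argument shows that more generally'', referring back to the $n=2$ computation via Sylvester's criterion on the $2\times 2$ principal minors, which is precisely what you spell out. Your handling of the special case $\Omega=\varepsilon J$ is in fact slightly more careful than the paper's, since you correctly note $b_{jk}=\varepsilon\delta_{jk}$ and hence the $\varepsilon^2$ term appears only on the diagonal $j=k$.
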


A warning: the group of inequalities (\ref{upc1})--(\ref{upc3}) is \emph{not
equivalent} to the condition $\Sigma+i\Omega\geq0$ as soon as $n>1$. Here is a
simple counterexample in the case $n=2$ and $\Omega=J$: choose $\eta=1$ and%
\[
\Sigma=%
\begin{pmatrix}
1 & -1 & 0 & 0\\
-1 & 1 & 0 & 0\\
0 & 0 & 1 & 0\\
0 & 0 & 0 & 1
\end{pmatrix}
.
\]
This matrix is positive definite, and the inequalities above hold trivially
(they reduce to equalities); the matrix $\Sigma+iJ$ is however indefinite (its
determinant is $-1$).

\section{The Symplectic Capacity of an Ellipsoid}

For the basics of symplectic geometry we are going to use we refer to Arnol'd
\cite{Arnold}, de Gosson \cite{Birk} (Arnol'd uses the term \textquotedblleft
canonical transformation\textquotedblright\ for symplectomorphism; this
terminology is usual in Physics).

The condition $\Sigma+i\Omega\geq0$ can be restated in terms of the
\emph{symplectic capacity }of the minimum volume ellipsoid $\mathcal{C}_{0}$
given by (\ref{cov1}). Let us recall the definition of the notion of
symplectic capacity (of which we have given a detailed discussion in de Gosson
and Luef \cite{physreps}). Let us call symplectic manifold a submanifold $U$
of $\mathbb{R}^{2n}$ (possibly with boundary) equipped with a symplectic form
$\alpha$. A symplectic capacity associates to every symplectic manifold
$(U,\alpha)$ a number $c(U,\alpha)\geq0$ or $+\infty$; this correspondence
must satisfy the following axioms \cite{HZ,Polter}:

\begin{enumerate}
\item \textbf{Monotonicity}: If $\Phi:(U,\alpha)\longrightarrow(V,\beta)$ is a
symplectic embedding, i.e. a diffeomorphism satisfying $\beta(\Phi
(z),\Phi(z^{\prime})=\alpha(z,z^{\prime})$ we must have $c(U,\alpha)\leq
c(V,\beta)$;

\item \textbf{Conformality}: For every real $\lambda\neq0$ we have
$c(U,\lambda\alpha)=\lambda^{2}c(U,\alpha)$;

\item \textbf{Normalization}: $c(B(R),\sigma)=c(Z_{j}(R),\sigma)=\pi R^{2}$;
here $\sigma$ is the standard symplectic form, $B(R)$ the ball $|x|^{2}%
+|p|^{2}\leq R^{2}$ and $Z_{j}(R)$ the cylinder $x_{j}^{2}+p_{j}^{2}\leq
R^{2}$.
\end{enumerate}

In general symplectic capacities are not related to the notion of volume; this
is already clear from the normalization condition $c(Z_{j}(R),\sigma)=\pi
R^{2}$ which shows that as soon as $n>1$ the symplectic capacity of a region
with infinite volume can be finite; in fact the conformality axiom shows that
symplectic capacities behave as areas under dilations. Also note that
symplectic capacities are extensive quantities, i.e. they do not depend
directly on dimension, as volume does.

The monotonicity axiom implies that if there exists a symplectomorphism
(=symplectic diffeomorphism) $\Phi:(U,\alpha)\longrightarrow(V,\beta)$ such
that $\Phi(U)=V$ then%
\begin{equation}
c(U,\alpha)=c(\Phi(U),\beta)=c(V,\beta) \label{equal}%
\end{equation}
hence \emph{symplectic capacities are symplectic invariants}. The basic
example of a symplectic capacity is Gromov's width\footnote{Sometimes also
called symplectic area.} $c_{\mathrm{GR}}$. It is defined as follows: let
$R_{\mathrm{GR}}$ be the \textquotedblleft symplectic radius\textquotedblright%
\ of $U$: it is the supremum of all radii of balls $B(R)$ that can be embedded
in $U$ using symplectomorphisms of $(\mathbb{R}^{2n},\sigma)$. By definition
the Gromov width of $U$ is then $c_{\mathrm{GR}}(U,\sigma)=\pi R_{\mathrm{GR}%
}^{2}$ (with $c_{\mathrm{GR}}(U,\sigma)=\infty$ if $R_{\mathrm{GR}}=\infty)$.
The fact that the normalization axiom is satisfied follows from Gromov's
non-squeezing theorem \cite{Gromov}, a deep property of symplectic topology
which says that the ball $B(R)$ cannot be embedded inside a cylinder
$Z_{j}(r)$ with radius $r<R$. If $U$ is a compact, connected and simply
connected domain in the symplectic plane $(\mathbb{R}^{2},\sigma
)=(\mathbb{R}^{2},-\det)$ then $c_{\mathrm{GR}}(U,\sigma)$ is the area of $U$.
There exist infinitely many symplectic capacities (and $c_{\mathrm{GR}}$ is
the smallest of all) but they all agree on ellipsoids (see e.g.
\cite{HZ,physreps}). We are going to give an explicit formula below, but let
us first introduce the following notation and terminology. Let $M$ be a
positive definite $2n\times2n$ real matrix and consider the product $JM$. Its
eigenvalues are those of the antisymmetric matrix $M^{1/2}JM^{1/2}$ and are
thus of the type $\pm i\lambda_{\sigma,j}(M)$, $j=1,...,n$, with
$\lambda_{\sigma,j}(M)>0$. The numbers $\lambda_{\sigma,j}(M)$ are called the
$\sigma$- \emph{eigenvalues} of $M$; up to a simultaneous reordering of the
variables $x_{j}$ and $p_{j}$ one can always assume that $\lambda_{\sigma
,1}(M)\geq\cdot\cdot\cdot\geq\lambda_{\sigma,n}(M)$. The ordered set
\[
\operatorname*{Spec}\nolimits_{\sigma}(M)=(\lambda_{\sigma,1}(M),...,\lambda
_{\sigma,n}(M))
\]
is then called the $\sigma$- \emph{spectrum} of $M$ (when it is understood
that it is the standard symplectic structure which is used one speaks about
symplectic eigenvalues and symplectic spectrum). One proves the following
properties:%
\begin{equation}
M\leq M^{\prime}\Longrightarrow\lambda_{\sigma,j}(M)\leq\lambda_{\sigma
,j}(M^{\prime})\text{ , }j=1,...,n \label{mmprime}%
\end{equation}
where $M\leq M^{\prime}$ means that $M^{\prime}-M$ is semi-definite positive,
and
\begin{equation}
(\lambda_{\sigma,1}(M^{-1}),...,\lambda_{\sigma,n}(M^{-1}))=(\lambda
_{\sigma,n}(M)^{-1},...,\lambda_{\sigma,1}(M)^{-1}) \label{minv}%
\end{equation}
(see de Gosson \cite{Birk}, \S 8.3). Moreover there exists a symplectic matrix
$S\in\operatorname*{Sp}(2n,\sigma)$ such that
\begin{equation}
S^{T}MS=%
\begin{pmatrix}
\Lambda^{M} & 0\\
0 & \Lambda^{M}%
\end{pmatrix}
\label{sms}%
\end{equation}
where $\Lambda^{M}$ is the diagonal matrix $\operatorname{diag}(\lambda
_{\sigma,1}(M)\geq\cdot\cdot\cdot\geq\lambda_{\sigma,n}(M))$ (Williamson's
diagonal form, see \cite{Arnold,Birk,HZ,wi36}). Formula (\ref{sms}) implies
the following: let $\mathcal{M}$ be the phase space ellipsoid by inequality
$z^{T}Mz\leq1$. The the inverse image $S^{-1}(\mathcal{M})$ has the normal
form%
\begin{equation}
\sum_{j=1}^{n}\lambda_{\sigma,j}(M)(x_{j}^{2}+p_{j}^{2})\leq1. \label{normal}%
\end{equation}
One can of course replace the matrix $J$ above by the antisymmetric
non-degenerate matrix $\Omega=F^{T}JF$ ($F$ is defined as in Proposition
\ref{propf}). Considering as above the symplectic form $\omega(z,z^{\prime
})=-(z^{\prime})^{T}\Omega^{-1}z$ the $\omega$-spectrum of $M$ is the
decreasing sequence
\[
\operatorname*{Spec}\nolimits_{\omega}(M)=(\lambda_{\omega,1}(M),...,\lambda
_{\omega,n}(M))
\]
of positive numbers such that the $\pm i\lambda_{\omega,j}(M)$ ($\lambda
_{\omega,j}(M)>0$) are the eigenvalues of of $\Omega M$; the properties
(\ref{mmprime}) and (\ref{minv}) hold mutatis mutandis, replacing the
subscript $\sigma$ with $\omega$. We have:%
\begin{equation}
\operatorname*{Spec}\nolimits_{\omega}(M)=\operatorname*{Spec}%
\nolimits_{\sigma}(FMF^{T}). \label{specspec}%
\end{equation}
To prove this it is sufficient to show that $\Omega M$ and $J(FMF^{T})$ have
the same eigenvalues. Suppose $\Omega M=F^{T}JFMz=\lambda z$ for some complex
number $\lambda$ and $z\neq0$. This is equivalent to $JFMz=\lambda(F^{T}%
)^{-1}z$ and hence to $JFMF^{T}\left[  (F^{T})^{-1}z\right]  =\lambda\left[
(F^{T})^{-1}z\right]  $ which proves our claim.

\begin{proposition}
\label{prop3}Let $M$ be definite positive and consider the ellipsoids
$\mathcal{M}^{+}=\{z:z^{T}Mz\leq1\}$ and $\mathcal{M}^{-}=\{z:z^{T}M^{-1}%
z\leq1\}$ in $\mathbb{R}^{2n}$. For every symplectic capacity $c$ we have
\begin{equation}
c(\mathcal{M}^{+},\sigma)=\pi/\lambda_{\sigma,1}(M)\text{ \ , \ \ }%
c(\mathcal{M}^{-},\sigma)=4\pi\lambda_{\sigma,n}(M). \label{sc}%
\end{equation}

\end{proposition}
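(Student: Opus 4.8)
\bigskip

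\noindent\textbf{Proof (plan).} The plan is to reduce both ellipsoids to Williamson normal form by a linear symplectomorphism and then pinch each of them between a ball and a cylinder of one and the same radius. Two preliminary observations make this work. First, all symplectic capacities coincide on ellipsoids, so it suffices to prove (\ref{sc}) for one capacity; equivalently, it suffices to exhibit for each of $\mathcal{M}^{+}$ and $\mathcal{M}^{-}$ a radius $R$ and an index $j$ with $B(R)\subseteq\mathcal{E}\subseteq Z_{j}(R)$, since then monotonicity together with the normalization axiom $c(B(R),\sigma)=c(Z_{j}(R),\sigma)=\pi R^{2}$ pinches $c(\mathcal{E},\sigma)$ to $\pi R^{2}$ for every capacity $c$. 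Second, a linear $S\in\operatorname*{Sp}(2n,\sigma)$ is a symplectomorphism of $(\mathbb{R}^{2n},\sigma)$ onto itself, so by the invariance (\ref{equal}) we may freely replace $\mathcal{M}^{\pm}$ by $S^{-1}(\mathcal{M}^{\pm})$.

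Concretely, I would pick $S\in\operatorname*{Sp}(2n,\sigma)$ as in Williamson's theorem (\ref{sms}), so that by (\ref{normal}) the set $S^{-1}(\mathcal{M}^{+})$ is the ellipsoid $\sum_{j=1}^{n}\lambda_{\sigma,j}(M)(x_{j}^{2}+p_{j}^{2})\leq1$. Ordering the $\sigma$-eigenvalues $\lambda_{\sigma,1}(M)\geq\cdots\geq\lambda_{\sigma,n}(M)$ and setting $R^{2}=1/\lambda_{\sigma,1}(M)$, the inequality $\sum_{j}\lambda_{\sigma,j}(M)(x_{j}^{2}+p_{j}^{2})\leq\lambda_{\sigma,1}(M)\sum_{j}(x_{j}^{2}+p_{j}^{2})$ shows that $B(R)\subseteq S^{-1}(\mathcal{M}^{+})$, while the fact that the first summand alone obeys $\lambda_{\sigma,1}(M)(x_{1}^{2}+p_{1}^{2})\leq1$ shows that $S^{-1}(\mathcal{M}^{+})\subseteq Z_{1}(R)$; the pinching observation then gives $c(\mathcal{M}^{+},\sigma)=\pi R^{2}=\pi/\lambda_{\sigma,1}(M)$. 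For $\mathcal{M}^{-}$ I would note that it is an ellipsoid of the same type $\mathcal{M}^{+}$ attached to the positive-definite matrix $M^{-1}$; applying the formula just obtained to $M^{-1}$ and invoking the reciprocity relation (\ref{minv}), which identifies the largest $\sigma$-eigenvalue of $M^{-1}$ with $\lambda_{\sigma,n}(M)^{-1}$, gives the second identity in (\ref{sc}).

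I do not expect a real obstacle here: granted Williamson's theorem (\ref{sms})--(\ref{normal}), the spectral reciprocity (\ref{minv}), and the three capacity axioms, the argument is essentially linear algebra. The only point that genuinely requires care is that the \emph{outer} set in the sandwich must be a cylinder $Z_{j}(R)$, not a circumscribed ellipsoid or box: it is exactly the normalization $c(Z_{j}(R),\sigma)=\pi R^{2}$ --- which is Gromov's non-squeezing theorem built into the axioms --- that makes the outer inclusion yield a sharp \emph{upper} bound matching the \emph{lower} bound from the inscribed ball, and for the two to meet one must use a ball and a cylinder of the same radius. Once that is arranged both inequalities close simultaneously and (\ref{sc}) follows.
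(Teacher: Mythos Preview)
Your argument is correct and matches the paper's structure: the paper derives the second identity from the first using the reciprocity (\ref{minv}) exactly as you do, and for the first identity it simply refers to \cite{physreps,HZ}, where the standard proof is precisely your Williamson reduction followed by the ball--cylinder sandwich $B(R)\subseteq S^{-1}(\mathcal{M}^{+})\subseteq Z_{1}(R)$ with $R^{2}=1/\lambda_{\sigma,1}(M)$. One remark: carrying your (and the paper's) reduction through for $\mathcal{M}^{-}$ gives $c(\mathcal{M}^{-},\sigma)=\pi/\lambda_{\sigma,1}(M^{-1})=\pi\lambda_{\sigma,n}(M)$, so the coefficient $4$ in the second formula of (\ref{sc}) appears to be a misprint in the statement rather than a gap in your argument.
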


\begin{proof}
See for instance \cite{physreps,HZ} for a proof of the first formula
(\ref{sc}). The second formula follows from the first in view of the equality
(\ref{minv}).
\end{proof}

The first formula (\ref{sc}) shows that the symplectic capacity of
$\mathcal{M}$ is the area of the intersection of that ellipsoid with the
$x_{1},p_{1}$ plane once it has been put in normal form (\ref{normal}). This
again shows that the symplectic capacity is related to an area, and not to
volume. In fact, using the invariance of the action form one can restate the
first formula (\ref{sc}) in the following way:%
\begin{equation}
c(\mathcal{M},\sigma)=\oint\nolimits_{\gamma_{\min}}pdx=\tfrac{1}{2}%
\oint\nolimits_{\gamma_{\min}}pdx-xdp \label{action}%
\end{equation}
where $\gamma$ is the shortest Hamiltonian orbit carried by the surface of the
ellipsoid $\mathcal{M}$ (see de Gosson and Luef \cite{physreps}); the
symplectic capacity of an ellipsoid is thus explicitly expressed in terms of a
\emph{dynamical action}.

\section{Application to the Uncertainty Principle}

Let us now prove the main result of this article. Recall (formula
(\ref{cov1})) that the covariance matrix is the set
\begin{equation}
\mathcal{C}=\{z:(z-\bar{z})^{T}\Sigma^{-1}(z-\bar{z})\leq m_{0}^{2}\}
\label{cov2}%
\end{equation}
for some suitable choice of the real number $m_{0}$.

\begin{theorem}
The condition $\Sigma+i\Omega\geq0$ for the covariance matrix $\Sigma$ is
equivalent (for every symplectic capacity $c$) to the condition%
\begin{equation}
c(\mathcal{C},\omega)=c((F^{T})^{-1}\mathcal{C},\sigma)\geq\pi m_{0}%
^{2}\lambda_{n}(\Sigma) \label{covsymp}%
\end{equation}
where $\lambda_{n}(\Sigma)$ is the largest $\sigma$- eigenvalue of $\Sigma$.
If (\ref{covsymp}) is satisfied then we have
\begin{align}
\Delta x_{j}^{2}\Delta x_{k}^{2}  &  \geq\Delta(x_{j},x_{k})^{2}+a_{jk}^{2}\\
\Delta p_{j}^{2}\Delta p_{k}^{2}  &  \geq\Delta(p_{j},p_{k})^{2}+c_{jk}^{2}\\
\Delta x_{j}^{2}\Delta p_{k}^{2}  &  \geq\Delta(x_{j},p_{k})^{2}+b_{jk}^{2}.
\end{align}

\end{theorem}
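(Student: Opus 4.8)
The plan is to reduce the statement to the scalar criterion $\Sigma + i\Omega \ge 0$ via the standard symplectomorphism $F$ of Proposition \ref{propf}, then read off the capacity inequality from Proposition \ref{prop3}, and finally invoke the already-established Proposition of Section 3 for the component inequalities. First I would rewrite the covariance ellipsoid $\mathcal{C}$ in terms of $\sigma$. Since $\Omega = F^{T}JF$, applying the linear map $(F^{T})^{-1}$ to $\mathcal{C}$ gives an ellipsoid of the form $z^{T}(F\Sigma^{-1}F^{T})z \le m_{0}^{2}$, i.e. $z^{T}M^{-1}z \le m_{0}^{2}$ with $M = (F^{T})^{-1}\Sigma F^{-1} = (F\Sigma^{-1}F^{T})^{-1}$; after the harmless rescaling by $m_{0}^{2}$ this is $\mathcal{M}^{-}$ in the notation of Proposition \ref{prop3} up to a factor. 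By symplectic invariance (equation (\ref{equal})) and the conformality axiom, $c(\mathcal{C},\omega) = c((F^{T})^{-1}\mathcal{C},\sigma) = m_{0}^{2}\, c(\{z:z^{T}(F\Sigma^{-1}F^{T})z\le 1\},\sigma)$, and Proposition \ref{prop3} together with (\ref{specspec}) turns the right-hand side into $4\pi m_{0}^{2}\lambda_{\sigma,n}(F\Sigma^{-1}F^{T})^{-1}$... wait — I must be careful with which formula of (\ref{sc}) applies and with the inversion identity (\ref{minv}); the clean route is $c((F^{T})^{-1}\mathcal{C},\sigma)=\pi m_{0}^{2}/\lambda_{\sigma,1}((F\Sigma^{-1}F^{T})^{-1}) = \pi m_{0}^{2}\lambda_{\sigma,n}(F\Sigma^{-1}F^{T})^{-1\,-1}$, and by (\ref{minv}) and (\ref{specspec}) this equals $\pi m_{0}^{2}\lambda_{\omega,n}(\Sigma)$, which I then identify with $\pi m_{0}^{2}\lambda_{n}(\Sigma)$ once the relation between the $\omega$-spectrum of $\Sigma$ and the $\sigma$-spectrum is pinned down (the $\lambda_{n}$ in the statement is the largest, so indexing conventions need a careful check).

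The heart of the equivalence is the identity, valid for any positive definite $\Sigma$ and any invertible antisymmetric $\Omega$,
\begin{equation}
\Sigma + i\Omega \ge 0 \iff \lambda_{\omega,j}(\Sigma) \ge 1 \text{ for all } j=1,\dots,n,
\end{equation}
or, after the rescaling by $m_{0}$, $\Sigma + i m_{0}^{2}\Omega \ge 0$ iff all $\omega$-eigenvalues of $m_{0}^{-2}\Sigma$ are $\ge 1$, equivalently $\lambda_{\omega,n}(\Sigma) \ge m_{0}^{2}$ for the largest one — which is exactly (\ref{covsymp}) after multiplying by $\pi$. To prove this identity I would diagonalize: by Williamson's theorem in the form (\ref{sms}) applied to the symplectic structure $\omega$, there is $S\in\operatorname{Sp}(2n,\omega)$ with $S^{T}\Sigma S = \operatorname{diag}(\Lambda,\Lambda)$, $\Lambda = \operatorname{diag}(\lambda_{\omega,1},\dots,\lambda_{\omega,n})$; conjugating $\Sigma + i\Omega$ by $S$ (using $S^{T}\Omega S = \Omega$ and $\Omega$ in its normal form) block-diagonalizes the Hermitian matrix into $n$ copies of $2\times 2$ blocks $\bigl(\begin{smallmatrix}\lambda_{\omega,j} & i\\ -i & \lambda_{\omega,j}\end{smallmatrix}\bigr)$, each of which is $\ge 0$ iff $\lambda_{\omega,j} \ge 1$. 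This is the standard argument (it is the multidimensional Robertson–Schrödinger computation); I would cite \cite{Birk,physreps} for the Williamson step and spell out only the $2\times 2$ block reduction.

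The last implication — that (\ref{covsymp}), equivalently $\Sigma + im_{0}^{2}\Omega\ge 0$, forces the three families of scalar inequalities — is immediate from the Proposition in Section 3 applied with $\Omega$ replaced by $m_{0}^{2}\Omega$ (so with $a_{jk}, b_{jk}, c_{jk}$ rescaled by $m_{0}$), since Sylvester's criterion on the $2\times 2$ principal minors of $\Sigma + i m_{0}^{2}\Omega$ gives exactly those relations; here the $a_{jk}^{2}, b_{jk}^{2}, c_{jk}^{2}$ in the theorem statement already absorb the $m_{0}^{2}$ factor. The main obstacle I anticipate is purely bookkeeping rather than conceptual: getting the ordering conventions consistent (the statement writes $\lambda_{n}(\Sigma)$ for the \emph{largest} $\sigma$-eigenvalue, whereas Section 4 orders $\lambda_{\sigma,1}\ge\cdots\ge\lambda_{\sigma,n}$ so the largest is $\lambda_{\sigma,1}$), tracking the factor $m_{0}^{2}$ through the conformality axiom versus the rescaling of $\Omega$, and making sure the capacity formula picks out $\lambda_{\omega,n}$ of $\Sigma$ (via the inverse-ellipsoid formula and (\ref{minv})) rather than $\lambda_{\omega,1}$. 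Once those conventions are fixed, the proof is a three-line chain: $\Sigma+im_{0}^{2}\Omega\ge 0 \iff \min_{j}\lambda_{\omega,j}(m_{0}^{-2}\Sigma)\ge 1 \iff c(\mathcal{C},\omega)\ge \pi m_{0}^{2}\lambda_{n}(\Sigma) \Rightarrow$ the scalar inequalities.
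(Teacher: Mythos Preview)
Your approach is correct and coincides with the paper's: center the ellipsoid by a translation and invoke Proposition~\ref{prop3} (the paper literally writes ``The result now follows from Proposition \ref{prop3} above with $M=m_{0}^{-1}\Sigma^{-1}$''), with the scalar inequalities read off from the Section~3 proposition. You actually supply more than the paper does --- the Williamson block-diagonalization giving $\Sigma+i\Omega\ge 0 \iff \lambda_{\omega,j}(\Sigma)\ge 1$ for all $j$, and the explicit passage through $(F^{T})^{-1}$ and (\ref{specspec}) --- whereas the paper leaves all of this implicit; your caution about the indexing convention (``largest'' versus $\lambda_{\sigma,1}\ge\cdots\ge\lambda_{\sigma,n}$) and the $m_{0}$ powers is well placed, since the paper's own statement and proof are loose on exactly those points.
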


\begin{proof}
Since translations are symplectomorphisms in any symplectic structure, the
ellipsoid $\mathcal{C}$ has the same symplectic capacity as the centered
ellipsoid $z^{T}\Sigma^{-1}z\leq m_{0}^{2}$. The result now follows from
Proposition \ref{prop3} above with $M=m_{0}^{-1}\Sigma^{-1}$.
\end{proof}

\section{Discussion}

The notion of symplectic capacity appears as a device allowing to measure in a
geometrical way the size of the MVE in a new way. The inequalities
\begin{align*}
\Delta x_{j}^{2}\Delta x_{k}^{2}  &  \geq\Delta(x_{j},x_{k})^{2}+a_{jk}^{2}\\
\Delta p_{j}^{2}\Delta p_{k}^{2}  &  \geq\Delta(p_{j},p_{k})^{2}+c_{jk}^{2}\\
\Delta x_{j}^{2}\Delta p_{k}^{2}  &  \geq\Delta(x_{j},p_{k})^{2}+b_{jk}^{2}.
\end{align*}
are not equivalent to the condition $c(\mathcal{C},\omega)\geq\pi m_{0}%
^{2}\lambda_{n}^{\Sigma}$ but are implied by it. Therefore, $c(\mathcal{C}%
,\omega)\geq\pi m_{0}^{2}\lambda_{n}(\Sigma)$ can be viewed as a stronger
--but natural-- version of the uncertainty principle. Its usefulness might
very well come from the fact that the condition $c(\mathcal{C},\omega)\geq\pi
m_{0}^{2}\lambda_{n}(\Sigma)$ is invariant under arbitrary symplectic
transformations (linear or not). It is in particular preserved under
Hamiltonian time evolution, since Hamiltonian flows consist of symplectomorphisms.

\begin{acknowledgement}
This work has been financed by the Austrian Research Agency FWF (Projektnummer P20442-N13).
\end{acknowledgement}

\end{document}